\documentclass[10pt, conference,usletter]{IEEEtran}
\usepackage{cite}
\ifCLASSINFOpdf
\else
\fi

\usepackage{graphicx}
\usepackage{epstopdf}
\usepackage{epsfig}
\usepackage[cmex10]{amsmath}
\usepackage{amssymb}
\usepackage{amsthm}
\usepackage{balance}

\usepackage[font=footnotesize, belowskip=2pt, aboveskip=0pt]{caption}
\setlength{\intextsep}{8.0pt}
\setlength{\textfloatsep}{2pt}

\newtheorem{thm}{Theorem}

\usepackage[square, comma, numbers]{natbib}
\usepackage{algorithm}
\usepackage{algorithmic}
\usepackage{subfigure}

\usepackage{array}
\ifCLASSOPTIONcompsoc
  \usepackage[caption=false,font=normalsize,labelfont=sf,textfont=sf]{subfig}
\else
  \usepackage[caption=false,font=footnotesize]{subfig}
\fi

%!! comment to set to final
\usepackage[draft,footnote,nomargin]{fixme}
%
%!! comment out set to final
%\usepackage[footnote,nomargin]{fixme}

%fixmes for the authors

%\usepackage{stfloats}
\usepackage{url}
\hyphenation{op-tical net-works semi-conduc-tor}

\abovedisplayskip=1pt
\belowdisplayskip=10pt
%\IEEEoverridecommandlockouts
%\IEEEpubid{%
%   \makebox[\columnwidth]{ISBN 978-3-901882-83-8~\copyright~2016 IFIP \hfill}%
%   \hspace{\columnsep}%
%   \makebox[\columnwidth]{ }%
%} 
\begin{document}
\title{Backward-Shifted Strategies Based on SVC for HTTP Adaptive Video Streaming}
\author{\IEEEauthorblockN{Zakaria Ye\IEEEauthorrefmark{1},
Rachid El-Azouzi\IEEEauthorrefmark{1},
Tania Jimenez\IEEEauthorrefmark{1},
Eitan Altman\IEEEauthorrefmark{2} and 
Stefan Valentin\IEEEauthorrefmark{3} 
} 
\IEEEauthorblockA{\IEEEauthorrefmark{1}University of Avignon, Avignon, France, 
\IEEEauthorrefmark{2}INRIA, Paris, France, 
\IEEEauthorrefmark{3}Huawei Technologies, France \\
Email: \{zakaria.ye, rachid.elazouzi, tania.jimenez\}@univ-avignon.fr} eitan.altman@inria.fr, stefan.valentin@huawei.com}
\maketitle
%\thispagestyle{plain}
%\pagestyle{plain}
% in the abstract
\begin{abstract}
Although HTTP-based video streaming can easily penetrate firewalls and profit from Web caches, the underlying TCP may introduce large delays in case of a sudden capacity loss. To avoid an interruption of the video stream in such cases we propose the Backward-Shifted Coding (BSC). Based on Scalable Video Coding (SVC), BSC adds a time-shifted layer of redundancy to the video stream such that future frames are downloaded at any instant. This pre-fetched content maintains a fluent video stream even under highly variant network conditions and leads to high Quality of Experience (QoE). We characterize this QoE gain by analyzing initial buffering time, re-buffering time and content resolution using the Ballot theorem. The probability generating functions of the playback interruption and of the initial buffering latency are provided in closed form. We further compute the quasi-stationary distribution of the video quality, in order to compute the average quality, as well as temporal variability in video quality. Employing these analytic results to optimize QoE shows interesting trade-offs and video streaming at outstanding fluency.
\end{abstract}
\begin{IEEEkeywords} Scalable Video Coding, Quality of Experience, Queuing Theory, Ballot theorem, QoE Optimization\end{IEEEkeywords}
\IEEEpeerreviewmaketitle

% Includes of the sections
\section{Introduction}
Recent studies show that video streaming already generated 45\% of all mobile data traffic in 2014 \cite{Cisco14}. By 2019, this fraction will likely increase to 62\%, while an 11-fold increase is predicted for the overall mobile data traffic \cite{Cisco14}. Despite the recent advances in increasing wireless capacity, this massive traffic load will drive mobile networks further into saturation and will turn user satisfaction into an enormous challenge. Consequently, more and more content providers and network operators will focus on Quality of Experience (QoE) per unit cost as primary metric for operational efficiency \cite{machine}.

%The evolution of multimedia services in the Internet and the increasing consumer demand for high definition (HD) content have led operators and industry to rethink the way networks are dimensioned and QoE is optimized from network side and user terminal side. 
The dominating factors of QoE are widely studied \citep{TKSK09, AMEI10}. Recent works developed approaches for understanding user engagements metrics \cite{DOB11, Sigcomm13}. The direct relation between time spent in rebuffering and user engagement was shown in \cite{Sigcomm13}. In \cite{DOB11}, the buffering ratio, rate of buffering, start up delay, rendering quality and average bit rate were demonstrated to show a dominating effect on QoE. Guidance to operators for improving user engagement in real time using only network-side measurements is provided in \cite{Sigmetric13}.  Authors of \cite{Yim11} found that temporal quality variation is worse than keeping a constant quality that is lower on the average. 

In general, understanding the QoE of mobile video is a complex task due to the many relationships between metrics and end-user's perceived video quality, metric-to-metric dependencies and confounding factors \cite{Sigmetric13}. At the same time, highly dynamic load and channel states lead to fluctuating capacity not only in mobile networks. 

To efficiently trade-off fluency and visual quality, more and more content providers deploy HTTP Adaptive Streaming (HAS) solutions, which are standardized as MPEG Dynamic Adaptive Streaming over HTTP (DASH) \cite{dash}. With DASH, each video file is divided into multiple small segments and each segment is encoded into multiple quality levels. Based on the available capacity, the client adaptively chooses the quality level of the segment such that visual quality is maximized at a low risk for an empty playback buffer.

The video segments can be created by encoding video content with various compression algorithms, where those following H.264/AVC (Advanced Video Coding) and H.264/SVC (Scalable video coding) \citep{codec, codec2} are widely employed. Although each AVC encoding run generates only segments of a single bitrate, segments for various bitrates can be created in multiple runs and chosen adaptively with HAS. This leads to a multiple bitrate video stream even with a non-scalable technique as AVC. On the other hand, SVC directly supports multiple bitrates within a single segment by multi-layer coding. With this technique, the video stream is encoded in one base layer and one or more enhancement layers. The base layer is always requested and, at sufficient capacity, one or more enhancement layers are additionally requested.

\subsection{Related Literature}
One benefit of HAS video streaming is that HTTP traffic can easily penetrate firewalls and profit from Web-infrastructure such as proxies and Content Delivery Networks (CDNs). The drawback, however, is that the underlying TCP protocol may introduce substantial delays to cope with packet errors and contention. For the video, this means that pixels errors and frame drops can essentially be ignored while resolution, initial buffering latency, starvation duration, and rate of buffering become the dominating factors for QoE. Such latencies are the result of an empty playback buffer as a consequence of choosing a higher quality than the supported bitrate \cite{Huang14}. To avoid such erroneous adaptation, current HAS policies are based on the measured segment fetch time that allows an instantaneous adaptation \cite{Bouaziz}. Authors in \cite{Zhou14} developed a rate adaptive method to enhance DASH performance over multiple content distribution servers. A Fuzzy-based controller has been proposed to dynamically adapt the video bitrate based on both the estimated  throughput and the size of the playback buffer \cite{sobhani15}. Other approaches have been explored by jointly considering the characteristics of the media content and the available wireless resources in the operator network \cite{Essaili, Lotte}.    

This paper proposes a complementary solution to DASH, named Backward-Shifted Coding (BSC). This solution makes HAS more robust to rapid fluctuations of the network capacity and provides more flexibility in increasing the quality of video without playback interruption. Furthermore, we develop an exact approach to obtain the distributions of the number of the playback interruptions and of the initial buffering latency. This analysis is closed to that of M/M/1 queue model in \cite{MULTI} and allows us to obtain an explicit formulation for the QoE metrics. While the bounds on the playback interruption probability were obtained in \cite{Paran} for an M/D/1 queue, our paper provides a new analysis for BSC which also obtains the average quality and temporal variability in video quality by using the quasi-stationary regime.  

\subsection{Main Contributions}
This paper provides important insight in optimization of HTTP adaptive video streaming and proposes a new Backward-Shifted Coding (BSC) scheme. The main idea of BSC is to add a layer of time-shifted redundancy to the video stream such that future frames are downloaded at any instant. In case of a sudden capacity drop, these pre-fetched frames can be played back and maintain a smooth video stream at sufficient quality. BSC can be implemented using the standard codecs and allows us to analytically obtain the key QoE factors by using the Ballot theorem.  Using these factors as inputs we finally propose a QoE optimization function.

We can summarize the main contributions of this paper as follows:
\begin{itemize}
\item[1.] We develop a novel coding scheme to improve the user QoE in HTTP adaptive streaming.
\item[2.] We present explicit form expressions for the QoE metrics. 
\item[3.] We propose an optimization scheme that take into account not only the waiting time, but also the mean video quality.
\item[4.] We show that our scheme can render a better QoE than existing bitrate adaptation algorithms used in DASH.
\item[5.] We show that our scheme can greatly reduce the probability of video playback interruption using several frames arrivals processes.
\end{itemize}

\subsection{Paper Organization}
The remainder of this paper is structured as follows: Section \ref{model} describes the system model. Section \ref{math} presents the analytical model for computing the QoE metrics using the Ballot theorem. Section \ref{optimization} presents the optimization issue for the QoE metrics while Section \ref{simulation} verifies the theoretical results and shows some numerical examples. Section \ref{conclusion} concludes the paper.

\section{System Design}
\label{model}
In this section we describe how our scheme Backward-Shifted Coding (BSC) can be used with any video codec that follows the H.264/SVC standard \citep{codec2}. Then we describe the integration into HAS based on the MPEG-DASH standard \cite{dash}.

\subsection{Mapping from BSC Scheme to Coding Scheme}
BSC is entirely client driven and it is independent of the video compression standard. The main idea of this scheme is to shift the base layer frames (low quality) and the enhancement layer frames (optimal quality), so that, when an interruption of playback buffer occurs, the base layer frames can still be played. To each frame $n$, we add its base layer in some subsequent frame $n-\phi+1$. If the starvation happened at frame $n$, the playback retrieves the base layer frame from frame $n-\phi+1$. In particular, we exploit temporal redundancy between subsequent frames in order to avoid the interruption of the playback buffer. Our scheme is inspired from Forward Error Correction (FEC) where the encoder adds redundancy to a message.  However, using the SVC codec, the BSC scheme does not generate any  overhead or redundant frames compared to the FEC scheme. Indeed,  with an SVC codec, the video bitstream contains a base layer and number of enhancement layers. The enhancement layers are added to the base layer to further improve the quality of video by increasing video frame-rate, temporal and quality scalability and spatial resolution. In our scheme, base layer and enhancements layers are temporally shifted as shown in Figure \ref{SVC}. We assume $\phi$ to be the offset between the basic layer frame and its enhancement layers. Frame $n$ with $1\leq n\leq \phi-1$ contains two blocks: complete block $n$ and the base layer of block $n+\phi-1$.  Frame $n$ with $n > \phi-1$ contains two blocks: the enhancement layers of block $n$ and the base layer of  block $n+\phi-1$.
\begin{figure}[t]
\begin{center}
\includegraphics[scale=0.5]{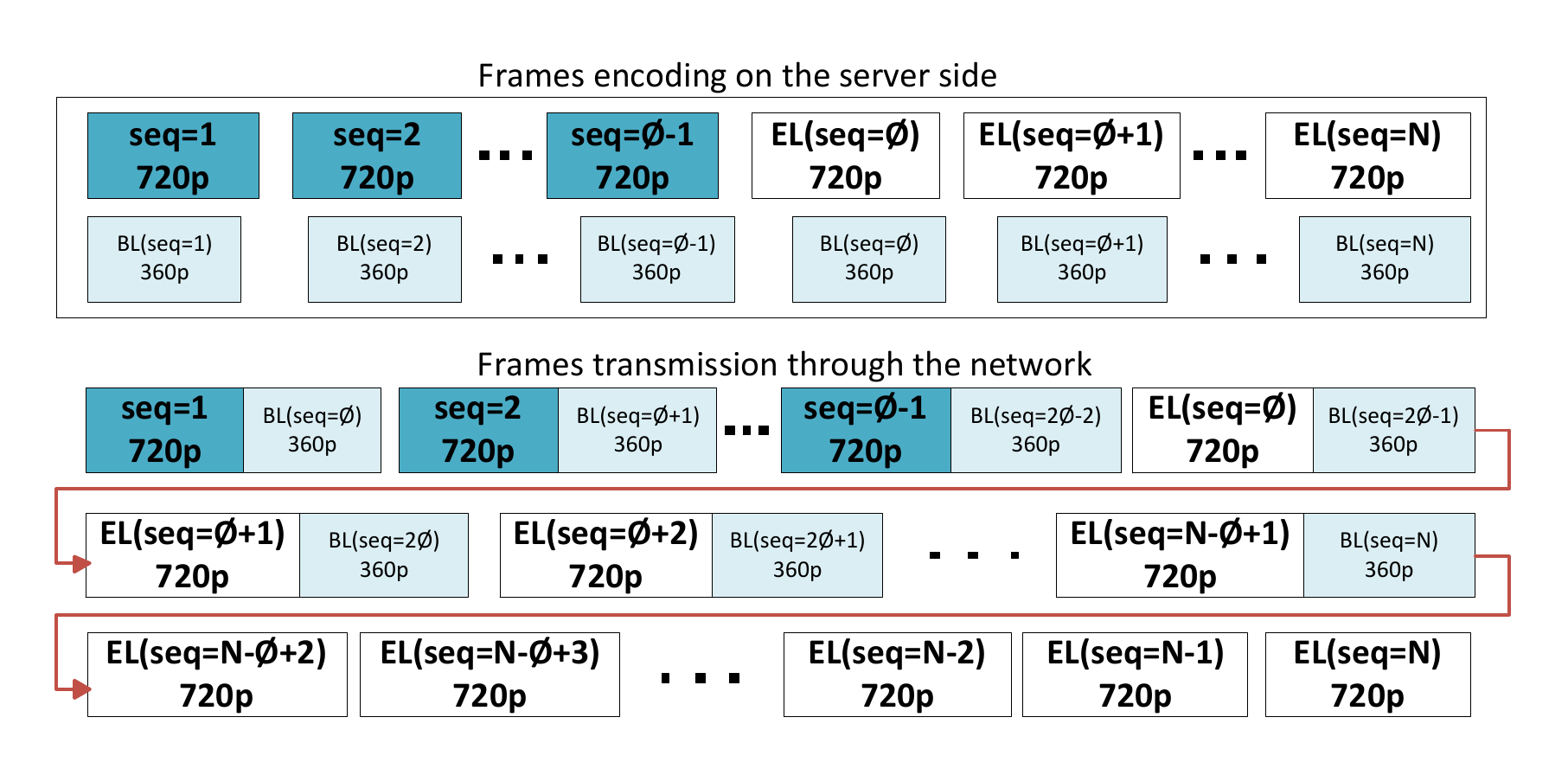}
\caption{Using SVC in Backward-Shifted Coding}
\label{SVC}
\end{center}
\end{figure}
At the user side, incoming bits are reassembled into video frames by the decoder. Starvation under BSC can happen at block $n$ if the base layer block $n$ is missing and the quality switching occurs when the enhancement layer is missing and the player finds only the base layer block $n$. 
 
\subsection{Mapping from BSC Scheme to DASH}
In DASH systems, each video consists of  multiple small segments at the media server and each segment is encoded at multiple discrete bitrates.  Hence, the BSC  scheme can be used at client side with  DASH.   In fact, when using the BSC scheme with DASH, the  video player decides the most suitable quality level of the base layer and the number of enhancement layers.  For example, given three video resolutions 720p, 480p and 360p,  the video  player can  decide  the most suitable configuration between the base layer and enhancement layer such as   360p (BL) and 480p (1 EL) or 360p (BL) and 720p (2 ELs).   This gives more flexibility to the video player to detect the highest quality level the current network conditions can support.

\section{Mathematical Performance Evaluation}
\label{math}
In order to evaluate the efficiency  of BSC using the SVC codec, in this section, we  develop a novel performance evaluation for QoE based on the Ballot theorem \citep{Feller_book}. This analysis captures QoE factors such as fraction of time spent rebuffering, content resolution and initial buffering latency. The analysis of starvation is closely related to analyzing the busy period in transient queues but differs in two aspects: First, our work aims to find the probability generating function of starvation events and not the queue size. Second, we do not assume a stationary arrival process.

\subsection{Starvation Analysis}
In this section, we call optimal frame the enhancement layers. The non-optimal frame corresponds to the base layer. We assume $N$ to be the media file size. When the streaming packets traverse the network, their arrivals to the media player are not deterministic due to the dynamic of the available bandwidth. The packets are reassembled by the decoder to render the video frames. We assume a Poisson distribution to describe the frames arrivals. After the streaming frames are received, they are first stored in the playout buffer. The interval between the service of two frames is assumed to be exponentially distributed so that we can model the receiver buffer as an M/M/1 queue.  The exponential distributed assumption is not the most realistic way to describe frame arrivals, but it reveals the essential features of the system, and it is the first step for more general arrival processes.  In Section \ref{simulation} we evaluate the performance of the Backward-Shifted Coding system by simulation, using different types of packet arrivals process such as the logistic process and the on-off process. The logistic process fits the video streaming traffic on the Long Term Evolution (LTE) networks according to \citep{streaming}

The maximum buffer size is assumed to be large enough to exclude buffer overflows. $\phi$ is the offset between the optimal frame and its corresponding non-optimal frame (Fig. \ref{fec}). A starvation happens when the playout buffer is empty. \\
We denote by $\lambda$ the Poisson arrival rate of the frames, and by $\mu$ the Poisson service rate. We define $\rho = \lambda/\mu$ to be the traffic load.\\
In a non-empty M/M/1 queue with everlasting arrivals, the rate at which either an arrival or a departure occurs is given by $\lambda+\mu$. This event corresponds to an arrival with probability $p$, or is otherwise to an end of service with probability $q$, where $$ p= \frac{\lambda}{\lambda+\mu } = \frac{\rho}{1+\rho}; \quad \quad q = \frac{\mu }{\lambda+\mu } = \frac{1}{1+\rho} $$ The buffer is initially empty. We let $ T_x $ be the initial buffering delay, in which $ x $ frames are accumulated in the buffer.

\subsubsection{\textbf{Probability of Starvation}}
We present a frame level model to investigate the starvation probability with the BSC. Our analysis of the probability of starvation is built on the Ballot theorem\footnote{Ballot theorem: In a ballot, candidate $ A $ scores $ N_A $ votes and candidate $ B $ scores $ N_B $ votes, where $ N_A > N_B $. Assume that while counting, all the ordering (i.e. all sequences of $ A $'s and $ B $'s) are equally alike, the probability that throughout the counting, $ A $ is always ahead in the count of votes is $ \frac{N_A - N_B }{N_A + N_B} $.}.
\vspace{0.2cm}\\ 
Since we set the value of $\phi$ at the beginning of the video session, the starvation can happen before the arrival of frame $\phi$ if $\phi > x$. So we have to investigate the probability of starvation by distinguishing the two cases: $\phi \leq x$ and $\phi > x$.  Let $P_{s}^{<}(N,\phi,x)$ and $P_{s}^{>}(N,\phi,x)$ denote, respectively, the probability of starvation for  $\phi \leq x$ and $\phi > x$. 
\begin{figure*}[htb!]
\begin{center}
\includegraphics[scale=0.6]{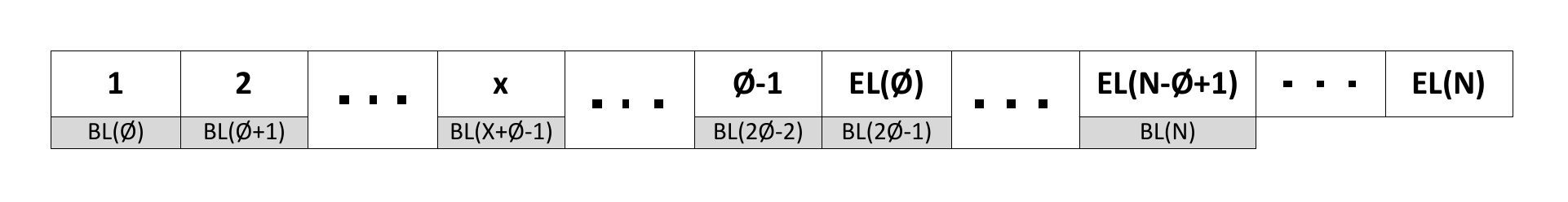}
\caption{\label{fec}The BSC Coding for the offset $\phi > x$. The starvation can happen before the arrival of optimal frame $\phi$ since $x < \phi$.}
\end{center}
\end{figure*}
 For $\phi \leq x$, the media player starts to work  when the number of optimal frames in the buffer reaches $x$ optimal frames which corresponds to $x+\phi-1$ non-optimal frames stored in the buffer.    Thus $P_{s}^{<}(N,\phi,x)$ is given by  \citep{MULTI}, 
\begin{multline}
\label{starvation1}
P_{s}^{<}(N,\phi,x) = \sum_{k=x+\phi-1}^{N-1}{\frac{x+\phi-1}{2k-x-\phi+1}\binom{2k-x-\phi+1}{k-x-\phi+1}}\cdot \\
p^{k-x-\phi+1}q^k.
\end{multline}
Let  $P_{x,\phi}^k$ and $P_{x}^k$ denote, respectively, the probability that the starvation happens exactly after the departure  of frame $k$ using BSC and without BSC.  These probabilities are given by 
\begin{multline}
\label{avecfec}
P_{x,\phi}^k= \frac{x+\phi-1}{2k-x-\phi+1}\binom{2k-x-\phi+1}{k-x-\phi+1} \cdot
p^{k-x-\phi+1}q^k,\\
P_{x}^k=\frac{x}{2k-x}\binom{2k-x}{k-x}\cdot p^{k-x}\cdot q^k. \hspace{2.88cm}
\end{multline} 
\begin{thm}
\label{thrm2}
For the offset $\phi > x$, the probability of starvation is given by: 
\begin{multline}
\label{starvation2}
P_{s}^{>}(N,\phi,x) = P_{s1}+(1-P_{s1})\cdot P_{s2}
\end{multline}
where
\begin{equation}
\label{avant}
P_{s1} = \sum_{k=x}^{\phi-2}{\frac{x}{2k-x}\binom{2k-x}{k-x}\cdot p^{k-x}\cdot q^k},
\end{equation}
and
\begin{multline}
\label{apres}
P_{s2}= \sum_{k=2\phi-2}^{N-1}{\frac{x+\phi-1}{2k-x-\phi+1}\binom{2k-x-\phi+1}{k-x-\phi+1}}\cdot\\
p^{k-x-\phi+1}q^k.
\end{multline}
\end{thm}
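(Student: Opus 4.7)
The plan is to split the starvation event according to whether it occurs in phase 1 (before frame $\phi$ has arrived, while the BSC cache still has a gap at blocks $x+1,\dots,\phi-1$) or phase 2 (after frame $\phi-1$ is received, when the shifted base layers become contiguous with the own-block base layers), then apply the Ballot theorem to each phase and combine by total probability.

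In phase 1, the system is indistinguishable from a plain M/M/1 queue whose initial buffer is the $x$ complete frames accumulated during initial buffering. The classical Ballot theorem (as in Equation~(\ref{starvation1})) yields that the probability of first emptying exactly at the departure of frame $k$ is $P_x^k$. For this to be a genuine starvation in phase 1 the next block to play must still lie in $\{1,\dots,\phi-1\}$, i.e.\ $k\le\phi-2$; if instead $k\ge\phi-1$ the next block is $\ge\phi$ and its base layer is already in the BSC cache, having been extracted from one of the first $\phi-1$ complete frames, so no true starvation occurs. Summing $P_x^k$ over $k=x,\dots,\phi-2$ gives $P_{s1}$ as in Equation~(\ref{avant}).

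In phase 2, I would invoke the memoryless property of the M/M/1 queue together with the $(\phi-1)$-unit jump of the effective buffer that occurs when the BSC cache becomes contiguous. Conditional on surviving phase 1, the subsequent dynamics can be treated as a fresh M/M/1 queue whose effective initial content is $x+\phi-1$, mirroring the $\phi\le x$ case. A direct Ballot argument on this shifted walk then gives the per-departure starvation probability $P_{x,\phi}^k$. The lower summation bound $k=2\phi-2$ is forced because any shifted-walk emptying at $k<2\phi-2$ corresponds to fewer than $\phi-1-x$ post-startup arrivals and in the real system would have triggered a phase 1 emptying first; such paths must be excluded from the phase 2 count. Summing up to the file-length bound $N-1$ yields $P_{s2}$ of Equation~(\ref{apres}). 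Since phase 1 and phase 2 starvations are mutually exclusive (only the first emptying matters), the total probability formula then gives $P_s^>(N,\phi,x) = P_{s1} + P(\text{no phase 1 starv})\cdot P_{s2} = P_{s1} + (1-P_{s1})P_{s2}$.

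The main obstacle lies in the phase 2 step: at the moment the BSC cache becomes contiguous, the state of the effective buffer is a random function of the phase-1 trajectory rather than the deterministic $x+\phi-1$ used in the Ballot formula. Showing that this randomness collapses into the claimed expression requires introducing the transformed walk $E'(t)=A(t)+\phi-1-D(t)$—a simple random walk starting deterministically at $x+\phi-1$—and arguing via a reflection/bijection that the Ballot paths of $E'$ first hitting $0$ at $k\ge 2\phi-2$ correspond, after weighting by $1-P_{s1}$, exactly to the real-system trajectories that avoid phase 1 starvation and starve in phase 2. Making this correspondence precise, and in particular verifying that the threshold $k=2\phi-2$ really does excise all phase-1 overlap, is the technical crux of the proof.
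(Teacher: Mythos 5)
Your proposal follows essentially the same route as the paper: the same decomposition of the starvation event into ``before frame $\phi$'' and ``after frame $\phi$'' pieces combined by total probability, the same Ballot computation of $P_{s1}$ over $k\in[x,\dots,\phi-2]$ with $k=\phi-1$ excluded because the base layer of block $\phi$ is already cached, and the same shifted walk with deterministic initial content $x+\phi-1$ whose first-emptying times are restricted to $k\ge 2\phi-2$ to obtain $P_{s2}$. The ``technical crux'' you flag is exactly the point the paper addresses by its stated trick of placing the origin of the Ballot count at the deterministic prefetch-completion state (i.e.\ $x$ optimal, hence $x+\phi-1$ non-optimal frames) rather than at the random instant frame $\phi-1$ arrives; beyond that the paper simply multiplies the unconditional $P_{s2}$ by $(1-P_{s1})$ without further justifying that the conditioning on $\bar{E}_{<\phi}$ leaves the phase-2 first-passage distribution unchanged, so your sketch is at the same level of rigor as the published argument.
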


\begin{proof}
For the case $\phi > x$, the starvation could happen before the arrival of frame $\phi$. We define $E_{<\phi}$ and $E_{>\phi}$ to be the event that the starvation happens for the first time before $\phi$ and after $\phi$, respectively. The event of starvation is $ E_{<\phi}\cup (E_{>\phi} \cap \bar{E}_{<\phi}) $; where $\bar{E}_{<\phi}$ is the complementary of $E_{<\phi}$ and is the event that no starvation happens before the arrival of frame $\phi$. We have $ P(E_{<\phi}\cup (E_{>\phi} \cap \bar{E}_{<\phi}))=P(E_{<\phi})+P(E_{>\phi} \cap \bar{E}_{<\phi}) $.
For the event $E_{<\phi}$, since we cannot use the non-optimal frames of the BSC coding because the starvation happens before $\phi$, the probability of starvation is $P_{s1}$. We exclude in this sum $\phi-1$ since the starvation cannot happen after the service of frame $\phi-1$ because the frame $\phi$ (non-optimal) is already stored in the buffer. 
Now we compute the probability of the second term $P(E_{>\phi} \cap \bar{E}_{<\phi})$. We have
$ P(E_{>\phi} \cap \bar{E}_{<\phi})= P(E_{>\phi} / \bar{E}_{<\phi})P(\bar{E}_{<\phi}) $.
Since $\bar{E}_{<\phi}$ is the complementary of $E_{<\phi}$, $P(\bar{E}_{<\phi})$ is $1-P_{s1}$. $E_{>\phi} / \bar{E}_{<\phi}$ is the event that a starvation happens for the first time after the arrival of frame $\phi-1$, given that the starvation does not happen before. Then, assuming that the starvation does not happen before $\phi-1$, frame $x+\phi-1$ will be played after $\phi-1$ as shown in Fig. \ref{thmfig}. At this instant, the non-optimal frame $2\phi-2$ is already in the buffer, so the starvation cannot happen before the service of $2\phi-2$. We use the Ballot theorem to compute the probability of the event $E_{>\phi} / \bar{E}_{<\phi}$ based on the non-optimal frames. The most important trick is the origin of the Ballot theorem, i.e., where to start the process of counting the frames arrivals and departures. The inappropriate method is to start the counting process just after the arrival of the frame $\phi-1$. At that moment, we do not know the number of departures that occur before. So we start the counting process when we have $x$ optimal frames in the buffer, that correspond to the last non-optimal frame $x+\phi-1$. We define $A_k$ to be an event that the buffer becomes empty for the first time when the service of frame $k$ is finished (Fig. \ref{thmfig}).
\begin{figure}[b]
\begin{center}
\includegraphics[scale=0.7]{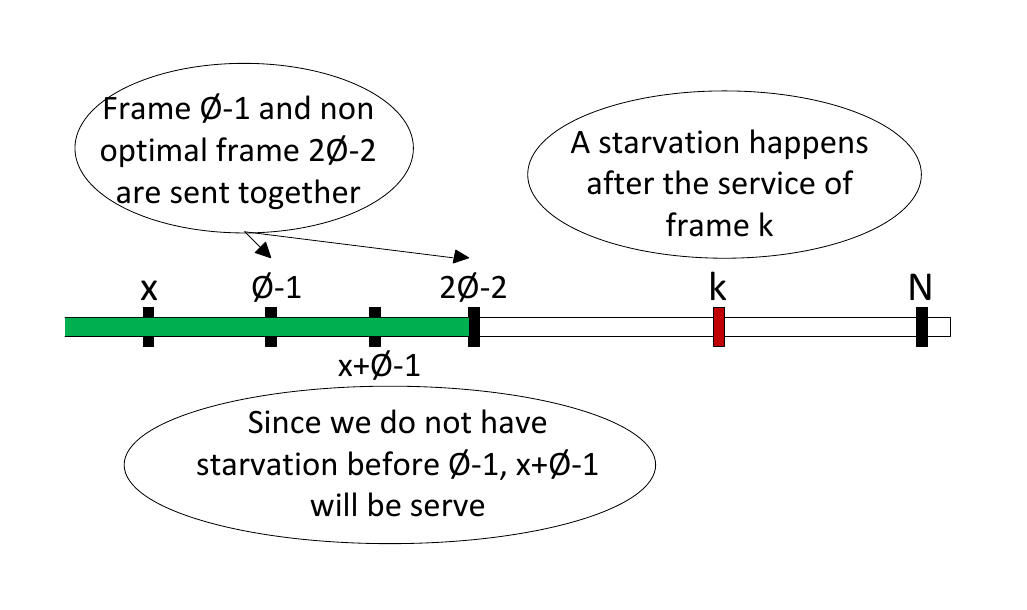}
\caption{\label{thmfig}If the starvation does not happen before $\phi-1$, then there will be no starvation untill the service of frame $2\phi-2$.}
%If the starvation does not happen before $\phi-1$, then there will be no starvation untill the service of frame $2\phi-2$
%The starvation does not happen before $\phi-1$
\end{center}
\end{figure}
All the events $A_k, k = 1, ...N$, are mutually exclusive. The event of starvation is the union $ \cup_{k=2\phi-2}^{N-1}{A_k} $. We exclude in this union $A_k$ for $k \in [1, 2\phi-3]$ because we cannot have a starvation before the arrival of optimal frame $\phi-1$. That corresponds to the non-optimal frame $2\phi-2$. This union of events excludes $E_N$ because the empty buffer after the service of $N$ packets is not a starvation. When the buffer is empty at the end of the service of the $k^{th}$ packet, the number of arrivals is $k-x-\phi+1$ after the pre-fetching process. The probability of having $k-x-\phi+1$ arrivals and $k$ departures is computed from the binomial distribution $\binom{2k-x-\phi+1}{k-x-\phi+1}\cdot p^{k-x-\phi+1}\cdot q^k$. For the necessary and sufficient condition of the event $A_k$, we apply the Ballot theorem. If we count the number of arrivals and departures when the playback starts, the number of departures is always greater than the number of arrivals. Otherwise, the empty buffer already happens before the $ k^{th} $ frame is served. According to the Ballot theorem, the probability of event $A_k$ is computed by $\frac{x+\phi-1}{2k-x-\phi+1}\binom{2k-x-\phi+1}{k-x-\phi+1}\cdot p^{k-x-\phi+1}\cdot q^k$. Therefore, the probability of the event $E_{>\phi} / \bar{E}_{<\phi}$,  is the probability of the union $ \cup_{k=x+\phi-1}^{N-1}{A_k} $, given by (\ref{apres}).
\end{proof}

\subsubsection{\textbf{Probability Generating Function of Starvation Events}}
\begin{figure}[b]
\begin{center}
\includegraphics[scale=0.6]{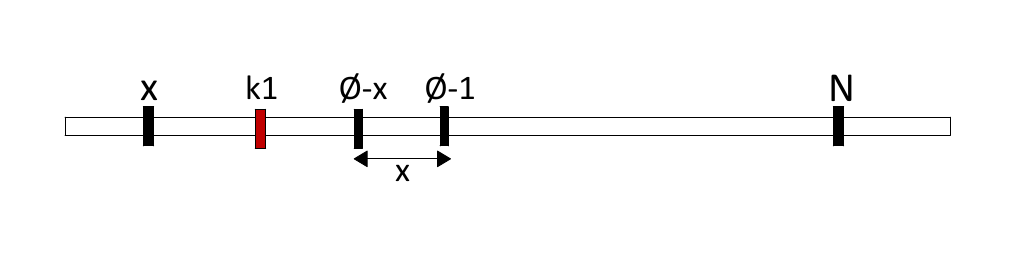}
\caption{\label{jstarv}A starvation happens at $k_1 < \phi-x$.}
\end{center}
\end{figure}
\begin{figure}[ht]
\begin{center}
\includegraphics[scale=0.6]{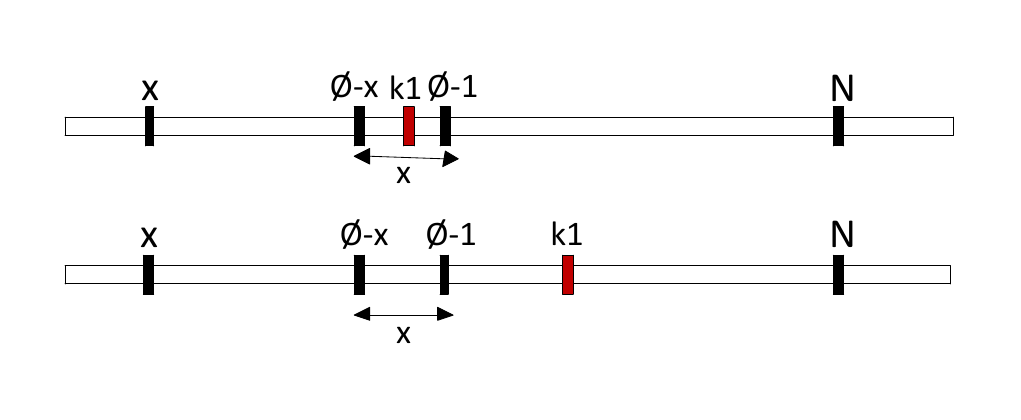}
\caption{\label{jstarv2}A starvation happens at $k_1 > \phi-x$.}
\end{center}
\end{figure}
In the BSC scheme, the starvation may happen for more than once during the file transfer. We are interested in the probability distribution of starvation, given the finite file size $N$. When $\phi \leq x$, we cannot have a starvation before the service of $\phi-1$. In this case, the probability generating function is similar to that of \cite{MULTI} in replacing $x$ by $x+\phi-1$. \\
%The maximum number of starvations is $ J = \lfloor \frac{N}{x+\phi-1} \rfloor$ where $ \lfloor . \rfloor$ is the floor of a real number. We can see that the BSC scheme decreases considerably the number of starvations since the maximum number of starvations in the simple M/M/1 system is $\lfloor \frac{N}{x} \rfloor$. \\
Now, we show how the probability generating function of starvation events can be derived using the Ballot theorem for the case $\phi > x$. We define a path as a complete sequence of frame arrivals and departures \cite{MULTI}. The probability of a path depends on the number of starvations. We consider a path with $ j $ starvations. To carry out the analysis, we start from the event that the first starvation takes place. We denote by $ k_l $ the $ l^{th} $ departure of a frame that sees an empty queue. We notice that the path can be decomposed into the following mutually exclusive events:
\begin{itemize}
\item[.] Event $ \mathcal{F} (k_1) $: the buffer becoming empty for the first time in the entire path.
\item[.] Event $ \mathcal{M}_l (k_l,k_{l+1}) $: the empty buffer after the service of frame $ k_{l+1} $ given that the previous empty buffer happens at the departure of frame $ k_l $.
\item[.] Event $ \mathcal{L}_j (k_j) $: the last empty buffer observed after the departure of the last frame $ k_j $. 
\end{itemize}
We let $ P_{\mathcal{F}(k_1)} $, $ P_{\mathcal{M}_{l}(k_{l},k_{l+1})} $ and $ P_{\mathcal{L}_j(k_j)} $ be the probabilities of events $ \mathcal{F}(k_1) $, $ \mathcal{M}_{l}(k_{l},k_{l+1}) $ and $ \mathcal{L}_j(k_j)$, respectively, and will analyze the probabilities of these events step by step. We first compute the probability of having only one starvation. This probability concerns the two events: $\mathcal{F}(k_1)$ and $\mathcal{L}_1(k_1)$, i.e., the event that the buffer becomes empty for the first time after the service of the frame $k_1$ and the event that we do not observe an empty buffer after $k_1$ until the end of the video file (Fig. \ref{jstarv} and \ref{jstarv2}).

The starvation can happen at $k_1$ before we use the BSC redundant frames, i.e., before the arrival of frame $\phi-1$. In this case, the probability of starvation is given by $P_{x}^{k_1}$ of (\ref{avecfec}). If the starvation happens after the arrival of frame $\phi-1$, then it necessarily happens after the service of frame $2\phi-3$ since we cannot have a starvation between $\phi-2$ and $2\phi-3$. Then the probability of starvation is given by Theorem \ref{thrm2}. So the probability distribution of event $ \mathcal{F}(k_1) $ is expressed as
\begin{equation}
P_{\mathcal{F}(k_1)} :=
\left \{
\begin{array}{c l l}
0, \quad \quad if ~~ k_1 < x ~~ or ~~ k_1 = N; \\
P_{x}^{k_1} \quad if ~~k_1 \in [x,...,\phi-2]; \\
0, \quad if ~~ k_1 \in [\phi-1,...,2\phi-3]; \\
(1-P_{s1})P_{x,\phi}^{k_1}, \\
if ~~k_1 \in [2\phi-2,...,N-1].
\end{array}
\right.
\end{equation}
$P_{x}^{k_1}$, $P_{x,\phi}^{k_1}$ and $P_{s1}$ are given by (\ref{avecfec}) and (\ref{avant}).
Given that the only starvation happens at $k_1$, what is the probability that no starvation happens until the end of the video? That is the probability of the event $\mathcal{L}_1(k_1)$. We take the complement of starvation probability as the probability of no starvation for the file size $N-k_1$. We denote $x_{\phi} = x+\phi-1$ to simplify the expressions. We distinguish two cases (Fig. \ref{jstarv} and \ref{jstarv2}). The first case is that the starvation happens at $k_1$, and we still can have a starvation before the arrival of frame $\phi-1$ (Fig. \ref{jstarv}). Then, we use the probability of starvation of the theorem \ref{thrm2}, $P_{s}^{>}(N-k_1,\phi,x)$ to compute the probability of having no starvation until the end of the video file. For the remaining case (Fig. \ref{jstarv2}), a starvation cannot happen before we use the non-optimal frames. Then, we use the probability of starvation, $P_{s}^{<}(N-k_1,\phi,x)$. Finally, the probability distribution of event $ \mathcal{L}_1(k_1) $ is expressed by
\begin{equation}
P_{\mathcal{L}_1(k_1)} =
\left \{
\begin{array}{l l l}
0, \quad \quad if ~ k_1 < x ~ or ~ k_1 = N; \\
1-P_{s}^{>}(N-k_1,\phi,x), \quad \quad if ~ x \leq k_1 < \phi-x; \\
1-P_{s}^{<}(N-k_1,\phi,x), ~~ if \\
\small{\phi-x \leq k_1 < \phi-1 ~or~ 2\phi-2 \leq k_1 < N-x_{\phi};} \\
0, \quad if \quad \phi-1 \leq k_1 < 2\phi-3; \\
1, \quad \quad if \quad N-x_{\phi} \leq k_1 < N.
\end{array}
\right.
\end{equation}
We denote by $ P_s(j) $ the probability of having $ j $ starvations. For the case with one starvation, $ P_s(1) $ is solved by 
\begin{equation}
P_s(1) = \sum_{i=1}^{N}{P_{\mathcal{F}(i)}P_{\mathcal{L}_1(i)}} = \textbf{P}_{\mathcal{F}}\cdot \textbf{P}_{\mathcal{L}_1}^{T}
\end{equation}
where $ ^T $ denotes the transpose. Here, $ \textbf{P}_{\mathcal{F}} $ is the row vector of $ P_{\mathcal{F}(i)} $, and $ \textbf{P}_{\mathcal{L}_1} $ is the row vector of $ P_{\mathcal{L}_1(i)} $, for $ i = 1, 2, ..., N $.
Now we compute the probability of having more than one starvation. A path with $ j $ starvations is composed of a succession of events $$ \mathcal{F}(k_1), \mathcal{M}_1(k_1,k_2),...,\mathcal{M}_{j-1}(k_{j-1},k_{j}),\mathcal{L}_j(k_j). $$ We have to solve the probability distribution of the events $\mathcal{L}_j(k_j) $ and $\mathcal{M}_{l}(k_{l},k_{l+1})$. Suppose that there are $ j $ starvations after the service of frame $ k_j $. Then, only the lower bound of $k_j$ changes in the event $\mathcal{L}_j(k_j) $ from $\mathcal{L}_1(k_1)$. This lower bound corresponds to the extreme case where the $ j $ starvations take place consecutively. Let $e$ be the number of starvations that we can have before frame $\phi-1$ in the extreme case. So $e=\lfloor \frac{\phi-2}{x} \rfloor$. We distinguish two cases where $e \geq j$ and $e < j$ (Fig. \ref{kjstarv}).
%\begin{figure}[t]
%\begin{center}
%\includegraphics[scale=0.7]{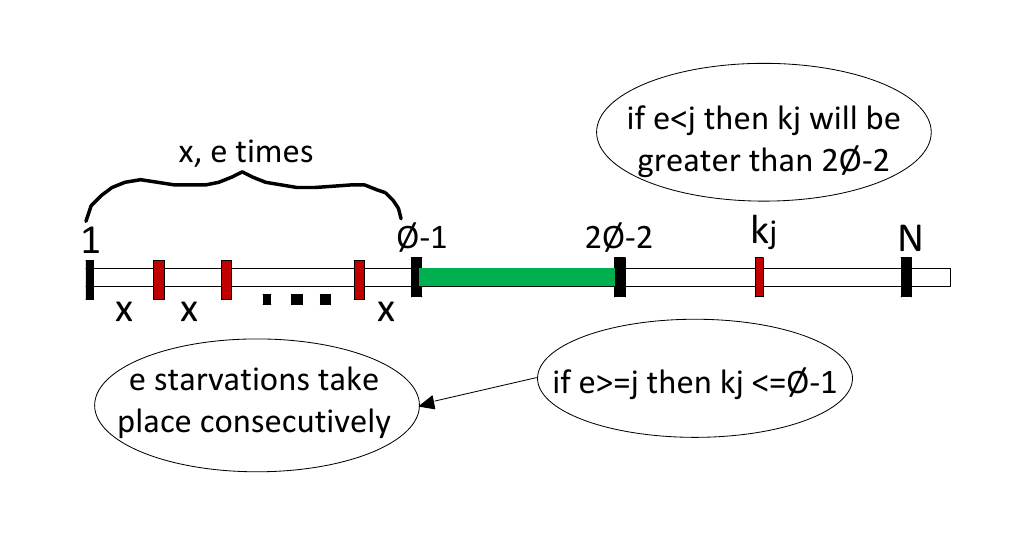}
%\caption{\label{kjstarv}The lower bound of $k_j$ in case we have $j$ starvations.}
%\end{center}
%\end{figure}
If $e \geq j$, then all the $j$ starvations will happen before the service of frame $\phi-1$ and the lower bound of $k_j$ is $jx$. Then we find the expression of $\mathcal{L}_j(k_j)$ in replacing the lower bound $x$ by $jx$ in the expression of $\mathcal{L}_1(k_1)$. If $e < j$, then the remaining starvations will happen after the service of frame $2\phi-2$ since we do not have starvation between $\phi-1$ and $2\phi-2$.
Hence, the probability distribution of event $ \mathcal{L}_j(k_j) $ is
\begin{equation}
P_{\mathcal{L}_j(k_j)} =
\left \{
\begin{array}{c l l}
0, \\
if ~ k_j < 2\phi-2+(j-e)x_{\phi} ~ or ~ k_j = N; \\
1-P_{s}^{<}(N-k_j,\phi,x), \\
if ~ 2\phi-2+(j-e)x_{\phi} \leq k_j < N-x_{\phi}; \\
1, \quad \quad if \quad N-x_{\phi} \leq k_j < N.
\end{array}
\right.
\end{equation}
We now compute the probability of the event $ \mathcal{M}_{l}(k_{l},k_{l+1}) $. After frame $ k_l $ is served, the $ l^{th} $ starvation is observed. We compare $k_l$ to $e$ to obtain the position of $k_l$. If $e < l$, then $k_l$ should not be less than $2\phi-2+(l-e)(x+\phi-1)$ in order to have $l$ starvations. Also, $k_{l+1}$ must satisfy $k_l + (x+\phi-1) \leq k_{l+1} < N-(j-l-1)(x+\phi-1)$ because of the pre-fetching after $k_l$ and the fact that we have $j$ starvations in total. In this case, $ P_{\mathcal{M}_{l}(k_{l},k_{l+1})} $ is expressed as $\frac{x+\phi-1}{2k_{l+1}-2k_l-x-\phi+1}\binom{2k_{l+1}-2k_l-x-\phi+1}{k_{l+1}-k_l-x-\phi+1}p^{k_{l+1}-k_l-x-\phi+1}q^{k_{l+1}-k_l}$. If $e \geq l$, then we have $lx \leq k_l \leq \phi-1$. For $k_l+x \leq k_{l+1}<\phi-1$, $ P_{\mathcal{M}_{l}(k_{l},k_{l+1})} $ is expressed as $\frac{x}{2k_{l+1}-2k_l-x}\binom{2k_{l+1}-2k_l-x}{k_{l+1}-k_l-x}p^{k_{l+1}-k_l-x}q^{k_{l+1}-k_l}$. Since the $(l+1)^{th}$ starvation cannot happen between $\phi-1$ and $2\phi-2$, for $2\phi-2 \leq k_{l+1} < N-(j-l-1)(x+\phi-1)$, the probability of the event $\mathcal{M}_{l}(k_{l},k_{l+1})$ is expressed as the probability for the case $e < l$.
We denote by $ \textbf{P}_{\mathcal{M}_{l}} $ the matrix of $ P_{\mathcal{M}_{l}(k_{l},k_{l+1})} $ for $ k_l, k_{l+1} \in [1, N] $. The probability of having $ j(j \geq 2)$ starvations is given by
\begin{figure}[t]
\begin{center}
\includegraphics[scale=0.7]{}
\caption{\label{kjstarv}The lower bound of $k_j$ in case we have $j$ starvations.}
\end{center}
\end{figure}
\begin{multline}
P_s(j) = \sum_{k_1=1}^{N}\sum_{k_2=1}^{N}, \dots, \sum_{k_{j-1}=1}^{N}\sum_{k_j=1}^{N}{P_{\mathcal{F}(k_1)} \cdot P_{\mathcal{M}_{1}(k_{1},k_{2})}}, \dots, \\
P_{\mathcal{M}_{j-1}(k_{j-1},k_{j})}\cdot P_{\mathcal{L}_j(k_j)} = \textbf{P}_{\mathcal{F}}\cdot \bigg( \prod_{l=1}^{j-1}{\textbf{P}_{\mathcal{M}_l}} \bigg) \cdot \textbf{P}_{\mathcal{L}_j}^{T}.
\end{multline}
Then, we can write the probability generating function (p.g.f) $ G(z) $ by
\begin{equation}
G(z) = E(z^j) = \sum_{j=0}^{J}{P_s(j)\cdot z^j}.
\end{equation}

\subsection{Analysis of the Video Quality}
\label{mean_quality}
In this section, we analyze the average video quality of the BSC system. For this purpose we model the system as a continuous-time Markov birth-death process with an absorbing state which corresponds to the starvation event. Then, we compute the amount of time spent in each bitrate level. We  call low bitrate, the bitrate of the base layer frames (or non-optimal frames) and optimal bitrate, the bitrate of the combined base and enhancement layers frames.
The switching process is shown in Fig. \ref{markov2}. \\
Let $A$ and $D$ be the sequence number of the last optimal frame in the buffer and the sequence number of the last frame that was displayed at the screen respectively. The state $n$ of the Markov process is $A-D$. If $D \leq A$, the state $n$ is positive and there is exactly $A-D$ available optimal frames in the buffer. Otherwise, if $D > A$, the state $n$ is negative. In that case, there is no more available optimal frames and the number of available non optimal frames is $A-D+\phi$. For example if the process is in the state $-\phi+1$ (i.e., $A-D = -\phi+1$), it remains exactly 1 non optimal frame in the buffer as shown in Fig. \ref{markov2}. \\
%where the state $n$ corresponds to the difference between the optimal frames and the non-optimal frames in the buffer. When the state $n$ is positive, $n$ corresponds to the number of optimal frames in the buffer that are not yet displayed. The state $0$ corresponds to the system state after the departure of the last optimal frame. Then, the system switches to display the non-optimal frames. In this case, the state $n$ becomes negative and denotes the number of the non-optimal frames that have been displayed. The state $-\phi$ corresponds to a starvation event since there is no more non-optimal frame to display. Then, it is an absorbing state.
\begin{figure}[t]
\begin{center}
\includegraphics[scale=0.5]{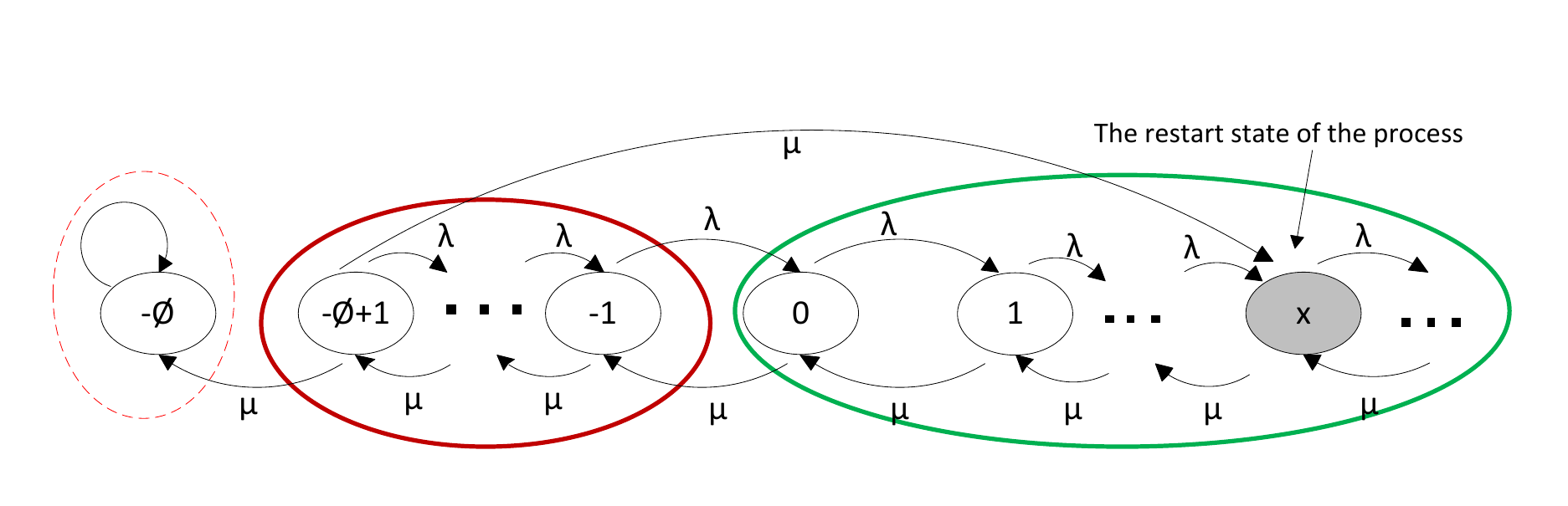}
\caption{\label{markov2} Markov model of the switching mechanism, denoting the difference between the number of optimal and non-optimal frames in the playout buffer.}
\end{center}
\end{figure}
%We translate the homogeneous birth-death process of Fig. \ref{markov2} into a classical birth-death process on the non-negative integers with the same birth rate $\lambda$ and death rate $\mu$. We add $+\phi$ on each state so that the absorbing state becomes $0$. 
The infinitesimal generator $Q$ of the process is a tridiagonal matrix  where   elements of the diagonal, upon the diagonal and under the diagonal are $-(\lambda+\mu)$, $\lambda$ and $\mu$ respectively, except the first element of the diagonal and upon the diagonal which is $0$ (due to the absorbing state).

First, we compute the time to absorption starting  from  initial state $k>-\phi$.  To do this, we substitute transition to the absorbing state $-\phi$ by transition  to the initial state $k$,  whenever the resulting Markov chain is ergodic and admits a stationary regime ${\bf q}=(q_{-\phi+1}, q_{-\phi+2},..)$. ${\bf q}$ is given by solving the balance equations of the resulting Markov chain
\begin{equation}
\label{balanceeq}
\left \{
\begin{array}{l c l}
\mu q_{-\phi+2}&=& (\lambda+\mu)q_{-\phi+1}\\
\lambda q_{j-1} + \mu q_{j+1}&=& (\lambda+\mu)q_{j}, j \neq x+1\\
\lambda q_{x-1} + \mu (q_{x+1}+q_{-\phi+1})&=& (\lambda+\mu)q_{x}\\
\sum_{j=-\phi+1}^{\infty}{q_j}=1
\end{array}
\right.
\end{equation}
\begin{eqnarray*}
q_j &=& q_{-\phi+1} \frac{1-\rho^{j+\phi}}{1-\rho},\;\; j=-\phi+1, -\phi+2,..,x\\
q_j&=&q_{-\phi+1}  \frac{1-\rho^{x+\phi}}{1-\rho}\rho^{j-x},\;\; j=x+1,x+2,...
\end{eqnarray*}
Thus, we have 
%\begin{eqnarray}
%q_{-\phi+1}&=& \frac{1-\rho}{\phi+x}\\
%q_j &=& \frac{(1-\rho^{j-\phi})}{(\phi+x)},\;\; j=-\phi+1,..,x\\
%q_j&=&\frac{(1-\rho^{x+\phi})}{((\phi+x)}\rho^{j-x},\;\; j=x+1,x+2,...
%\label{qq}
%\end{eqnarray}
\begin{equation}
\label{qq}
\left \{
\begin{array}{l c l}
q_{-\phi+1}&=& \frac{1-\rho}{\phi+x}\\
q_j &=& \frac{(1-\rho^{j+\phi})}{(\phi+x)},\;\; j=-\phi+1,..,x\\
q_j&=&\frac{(1-\rho^{x+\phi})}{((\phi+x)}\rho^{j-x},\;\; j=x+1,x+2,...
\end{array}
\right.
\end{equation}

Let $E[S_i]$ be the expected time spent in state $i$ before the process reaches the absorbing state. From \cite{Lefevre}, we have
$$
E[S_i]= q_i  E[\tau_x] 
$$
where $E_{[\tau_{x}]}$ is the  time to absorption into state  $-\phi$ starting from the initial state $x$.   According to \citep{timeabsorption}, the mean time to absorption into state $-\phi$ from the initial state $x$ is given by 
 \begin{equation}
E[\tau_x]  = \frac{x+\phi}{\mu - \lambda}
\end{equation}
Thus the  expected time spent in state $i$ before the process reaches the absorbing state is given by 
\begin{equation}
E[S_i]= q_i  \frac{x+\phi}{\mu - \lambda}
\end{equation}
where $q_i$ is given by (\ref{qq}).    Thus the time spent in the low bitrate and in the optimal bitrate are  given, respectively,   by
%\begin{eqnarray}
%T_{\mathcal{L}} & =&\sum_{i=-\phi+1}^{-1} E[S_i]\\
%T_{\mathcal{H}} & =&\sum_{i=-1}^{\infty} E[S_i]\\
%\end{eqnarray}
\begin{equation}
T_{\mathcal{L}} = \sum_{i=-\phi+1}^{-1} E[S_i] \mbox{ and } T_{\mathcal{H}} = \sum_{i=0}^{\infty} E[S_i]
\label{qualityx}
\end{equation}
%\begin{equation}
%T_{\mathcal{H}} = \sum_{i=0}^{\infty} E[S_i]
%\end{equation}

Let $b_{\mathcal{L}}$ and $b_{\mathcal{H}}$ be the bitrate of the low coding (base layer) and the bitrate of the optimal coding (base and enhancement layers) respectively.  Hence, the  average bitrate is 
\begin{equation}
b_{avg}= \frac{T_{\mathcal{L}}b_{\mathcal{L}}+b_{\mathcal{H}}T_{\mathcal{H}}}{T_{\mathcal{L}}+T_{\mathcal{H}}}
\end{equation}

Let us now compute the distribution of period of time during which the  video playback quality is optimal, named $B_H$.  This period corresponds to  the duration of time that the process starting from state 1, stays continuously away from state 0.  Using the analysis from the busy period in M/M/1, the  expected  and variance of  $B_H$  is given, respectively,  by \cite{timeabsorption}
$$
E[B_H]= \frac{1}{\mu-\lambda},\;\; \mbox{ and } V(B_H)=\frac{(1+\rho)}{\mu^2(1-\rho)^3}
$$

The analysis of the variation of the quality is omitted in this paper due to lack of space.

\subsection{The Initial Buffering Delay and Rebuffering Delay}
\label{sectionoffset}
The expected initial buffering delay is $\frac{x}{\lambda}$ and the expected rebuffering delay is $\frac{x+\phi-1}{\lambda}$.
Note that the BSC scheme increases the start-up delay compared to an equivalent single bitrate system because of the shift $\phi$. Indeed, we send the enhancement layer $\phi$ after the complete frame $\phi-1$. But in practice, this delay does not exceed a couple of seconds, then it does not have a huge impact on the overall quality of experience. However, $\phi$ impacts the rebuffering delay, i.e., the waiting time after a starvation event. The player starts when we accumulate $x$ optimal frames in the buffer. So a high value of $\phi$ increases the amount of rebuffering time.

\section{Evaluation of the QoE}
\label{optimization}
In this section, we show how the BSC can be used to improve the metrics of the quality of experience. The parameter $\phi$ affects the initial buffering delay and the starvation. We first confirm this with the model and show how one can choose $\phi$ to improve the QoE metrics. Then, we propose an evaluation function to improve the user experience during the video session.

\subsection{Choosing the Offset $\phi$}
To set the value of the offset $\phi$, we should know the probability of starvation of a simple system without BSC according to the file size $N$. This probability is given in \cite{MULTI} by 
\begin{equation}
\label{offset}
P_{starvation} = \sum_{k=x}^{N-1}{\frac{x}{2k-x}\binom{2k-x}{k-x}\cdot p^{k-x}\cdot q^k},
\end{equation}
It also corresponds to the probability of playing the non optimal frames for the first time in BSC system since a starvation in a system without BSC is equivalent to a switching in BSC system. The surrounded region in figure \ref{fecz} corresonds to the set of values of $N$ where the risk of the playback interruption is small ($\phi$ can be set up to 80 in the figure without risk of starvation).
\begin{figure}[t]
\begin{center}
\includegraphics[scale=0.38]{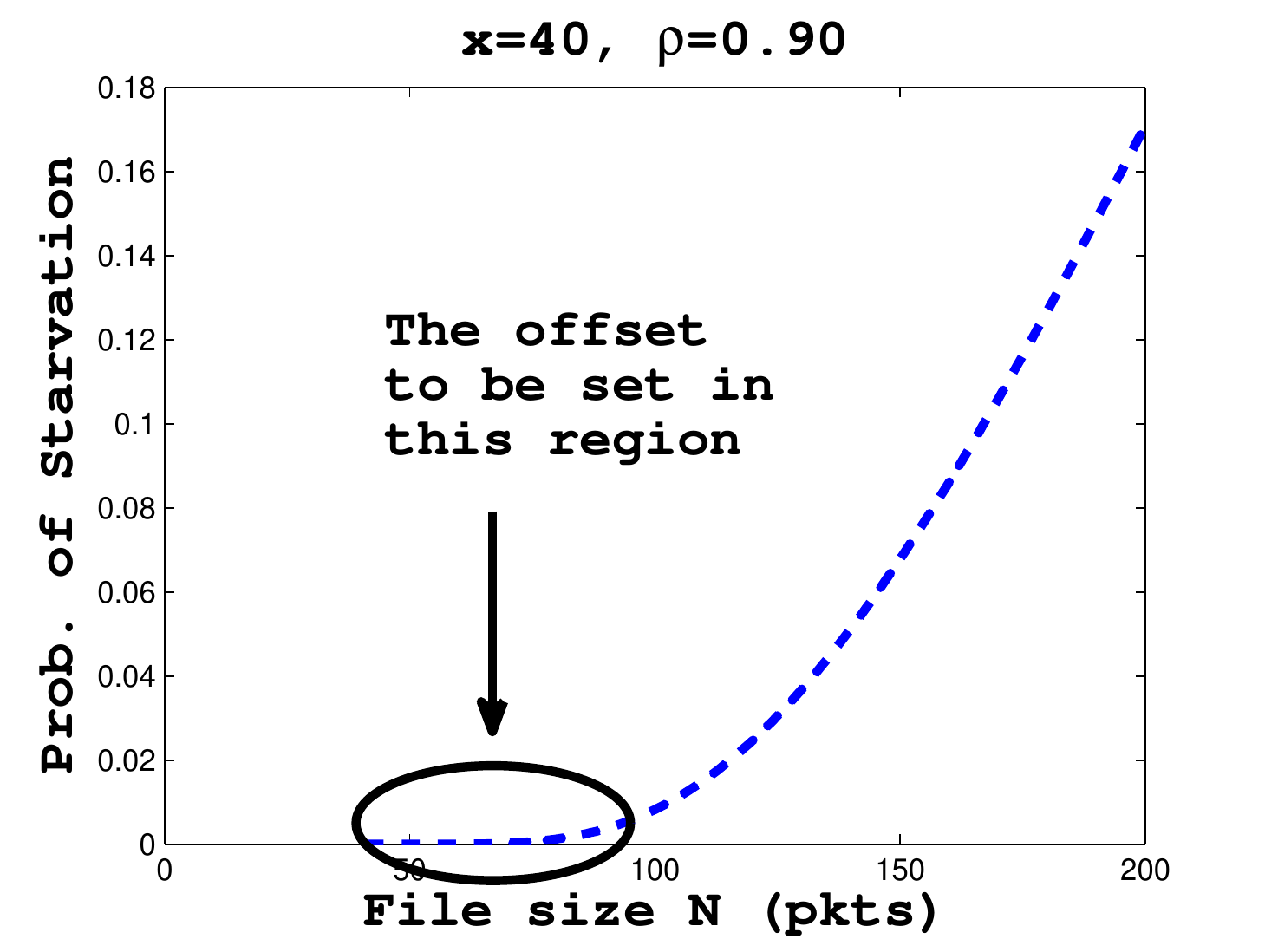}
\caption{\label{fecz}The value of the offset $\phi$.}
\end{center}
\end{figure}
Then, we can choose the offset $\phi$ to improve the QoE metrics: the rebuffering time, the starvation and the average bitrate. We have less risk of starvation for some values of $\phi$. Since after each starvation event, we wait for $x+\phi-1$ frames, small values of $\phi$ decreases the rebuffering time. Moreover, (\ref{qualityx}) shows that the time spent in low bitrate level increases with $\phi$. Then, $\phi$ must satisfy these QoE metrics requirements too. Parameter $\phi$ is  calculated  at the beginning of the video session based on the networks conditions and the user profile but its value can be updated after each starvation if the networks conditions changed. 

\subsection{Evaluation Scheme for the Global QoE}
To use our analysis for optimizing the QoE, we define an objective QoE cost function $C(x, N)$ for a user, which includes the expected initial buffering latency, the number of playback interruption during the video session and the average video quality. These metrics are weighted by three coefficients $\gamma_1$, $\gamma_2$ and $\gamma_3$, which allow us to balance the tradeoff among the QoE metrics according to user preferences. 

The weights $\gamma_1$ and $\gamma_2$ are preceded by a positive sign because the smaller the initial buffering latency and the number of starvation are, the better the QoE is. Implicit tradeoff exists between the two metrics and the initial buffering time is preferred to the starvation by around 90\% of users \citep{sdelay2}. Although users have a very low tolerance for playback interruptions \citep{youtubeqoe} they also only accept a start-up delay between 5 and 15 seconds, depending on the duration of the overall video \citep{sdelay2}.

The weight $\gamma_3$ is preceded by a negative sign because the higher the average quality, the better the QoE. 
%We better decrease the buffering time while it is the opposite for the average bitrate. 
According to this reasoning, we propose the cost function
\begin{equation}
C(x,N)= \gamma_1 \cdot E[T_x] + \gamma_2.P_s(j) - \gamma_3 \cdot \sum_{i=1}^{r}{w_i T_i}
\end{equation}
where $E[T_x]$ is the expected initial buffering delay, $P_s(j)$ is the number of starvations, $r$ is the number of available bitrates, $w_i$ is a weight associated to each bitrate and $T_r$ is the fraction of time spent in each bitrate level. We use this function to evaluate the BSC scheme performance on the QoE metrics, i.e., how our scheme can evaluate the objective QoE cost $C(x,N)$. The examples are shown in section \ref{QoEOPT}.

\section{Simulation and Numerical Examples}
\label{simulation}

\subsection{Simulation and validation}
\begin{figure}[t]
\begin{minipage}[t]{0.49\linewidth}
\centering\epsfig{figure=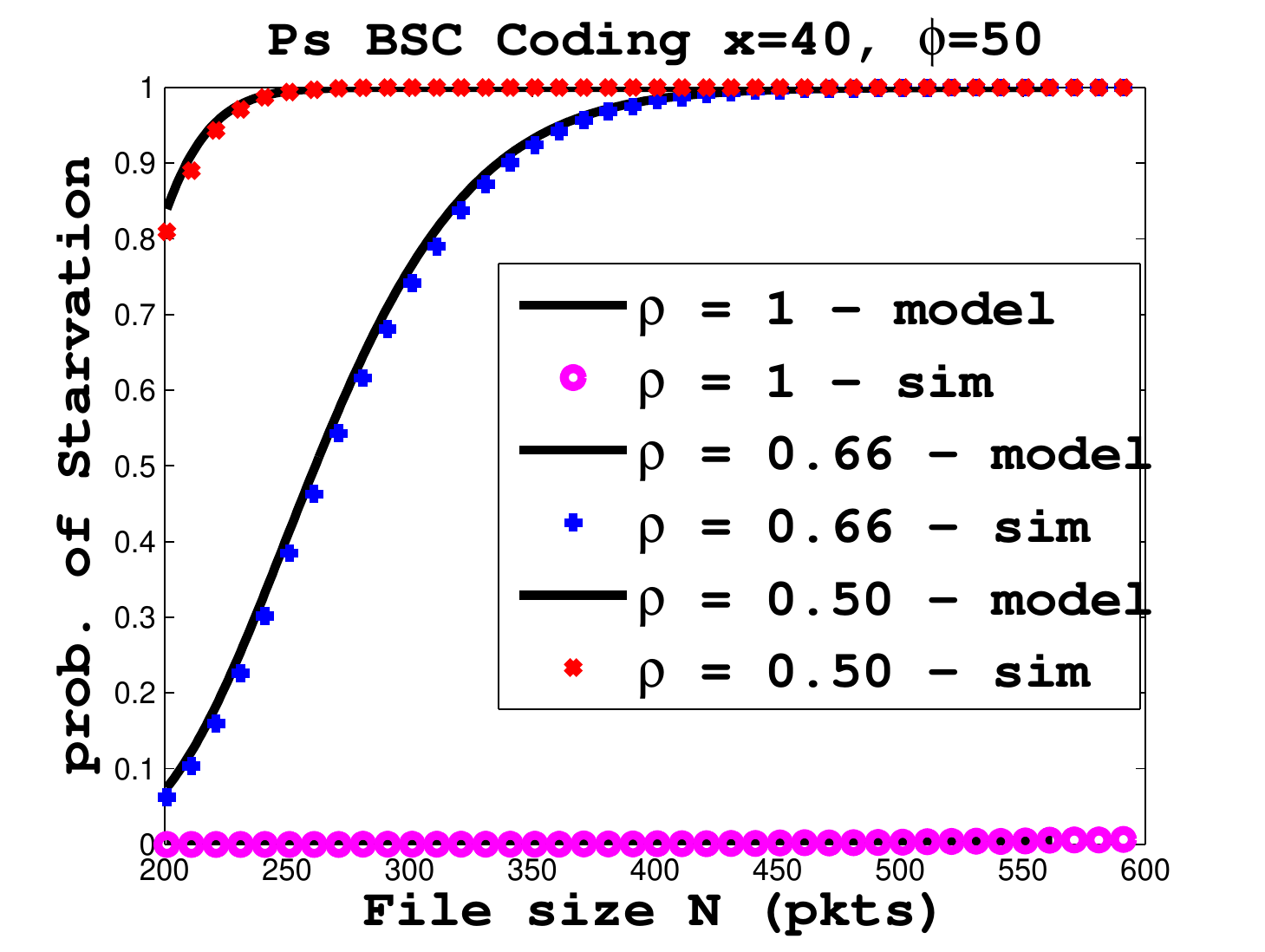, width=\linewidth}
\caption{The probability of starvation vs the file size N \label{starvfile}}
\end{minipage}
\begin{minipage}[t]{0.49\linewidth}
\centering\epsfig{figure=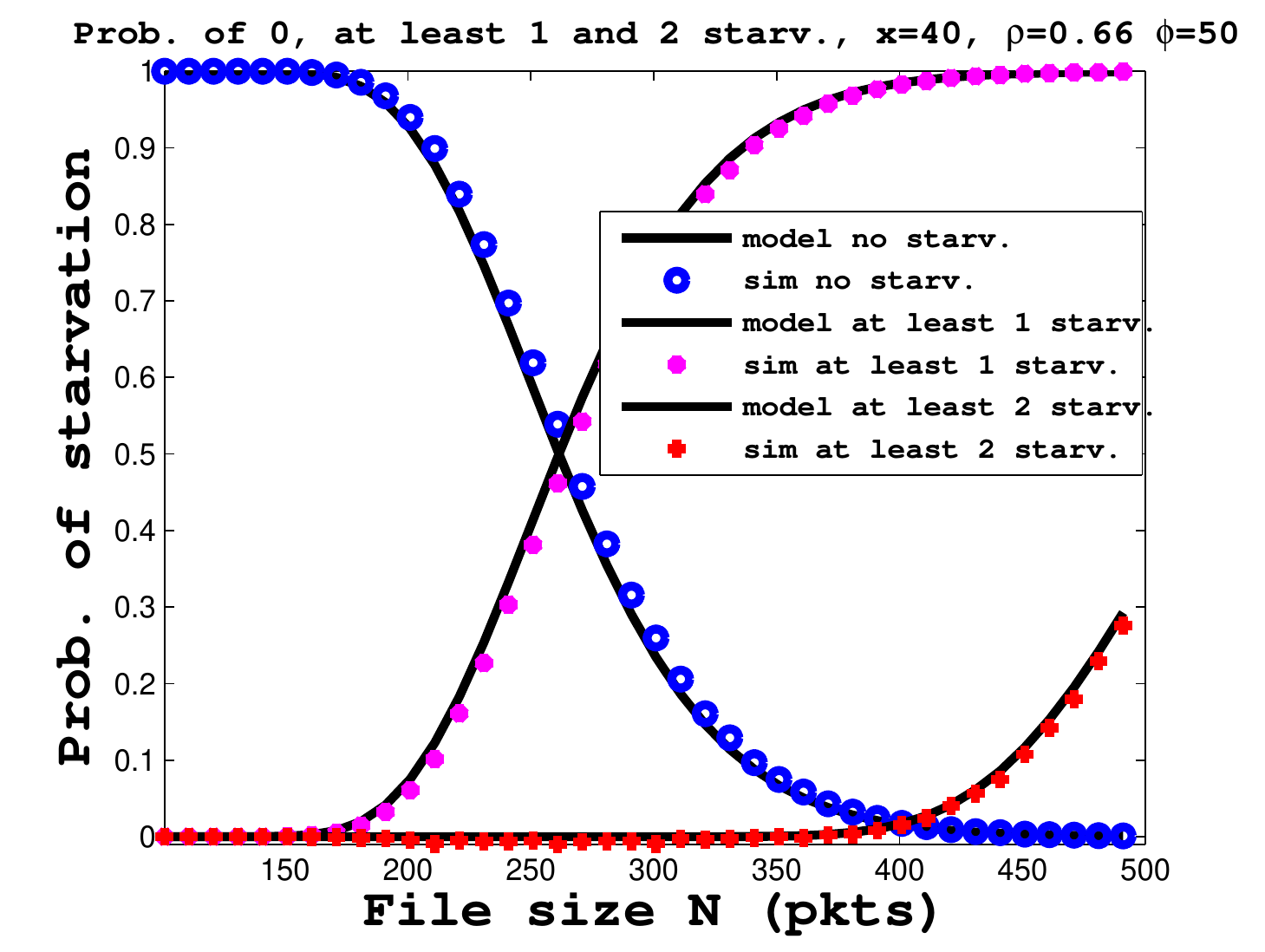, width=\linewidth}
\caption{The probability of no starvation, or having at least one and two starvations vs the file size N \label{noonetwo}}
\end{minipage}
\end{figure}
The mathematical models of QoE metrics are computed and compared with event-driven simulations using MATLAB. A timer generates random variable stamps that record the frames arrival and departure. Each frame contains two blocks (original frame and the copy of another frame). We monitor the playout buffer length based on the frames arrival rate and the playback rate. A switching occurs when there is no more frames of high bitrate. In this case, the player display only the frames with the low bitrate until high bitrate frames are available. A starvation happens when the buffer is empty. We take into account the size of the video file. We vary the parameters frame arrival rate $\lambda$, service rate $\mu$, the file size $N$, the pre-fetching threshold $x$, the offset $\phi$. We run each set of simulations for 4000 times and we show here only a small set of figures obtained, due to page number constraints.
Our model exhibits excellent accuracy with the simulations. Figure \ref{starvfile} shows the probability of starvation given the file size N, for different settings of the traffic intensity $\rho$. The start-up threshold $x=40$ while the offset $\phi=50$. The probability of starvation decreases when the network throughput increases.  The streaming users only suffer from the playback interruption when $\rho < 1$. 
%Figure \ref{starvx} shows the impact of the prefetching threshold $x$ (or equivalently the initial buffering latency) or the rebuffering threshold $x+\phi-1$ on the probability of starvation. It confirms the previous works on the tradeoff between the initial buffering latency and the playback interruption. Indeed, more we wait at the beginning of the video session or after a starvation event, better is the probability of starvation. Without prefetching, the starvation happens for sure.\\
We further evaluate the probabilities of having no starvation, at least one and two starvations, given the file size $N$ for $\rho=0.66$. Figure \ref{noonetwo} shows that our analytical model predicts the starvation probabilities accurately. The probability of no starvation decreases from 1 to 0, while the probability of having at least one and two starvations increases. The lag between the two curves gives an idea about the mean playback time, i.e., the mean time between two consecutive starvations. Then, based on the file size, the initial buffering latency, the offset and the traffic intensity, the model predicts the number of starvation that the video session could have.
\begin{figure}[t]
\begin{center}
\includegraphics[scale=0.4]{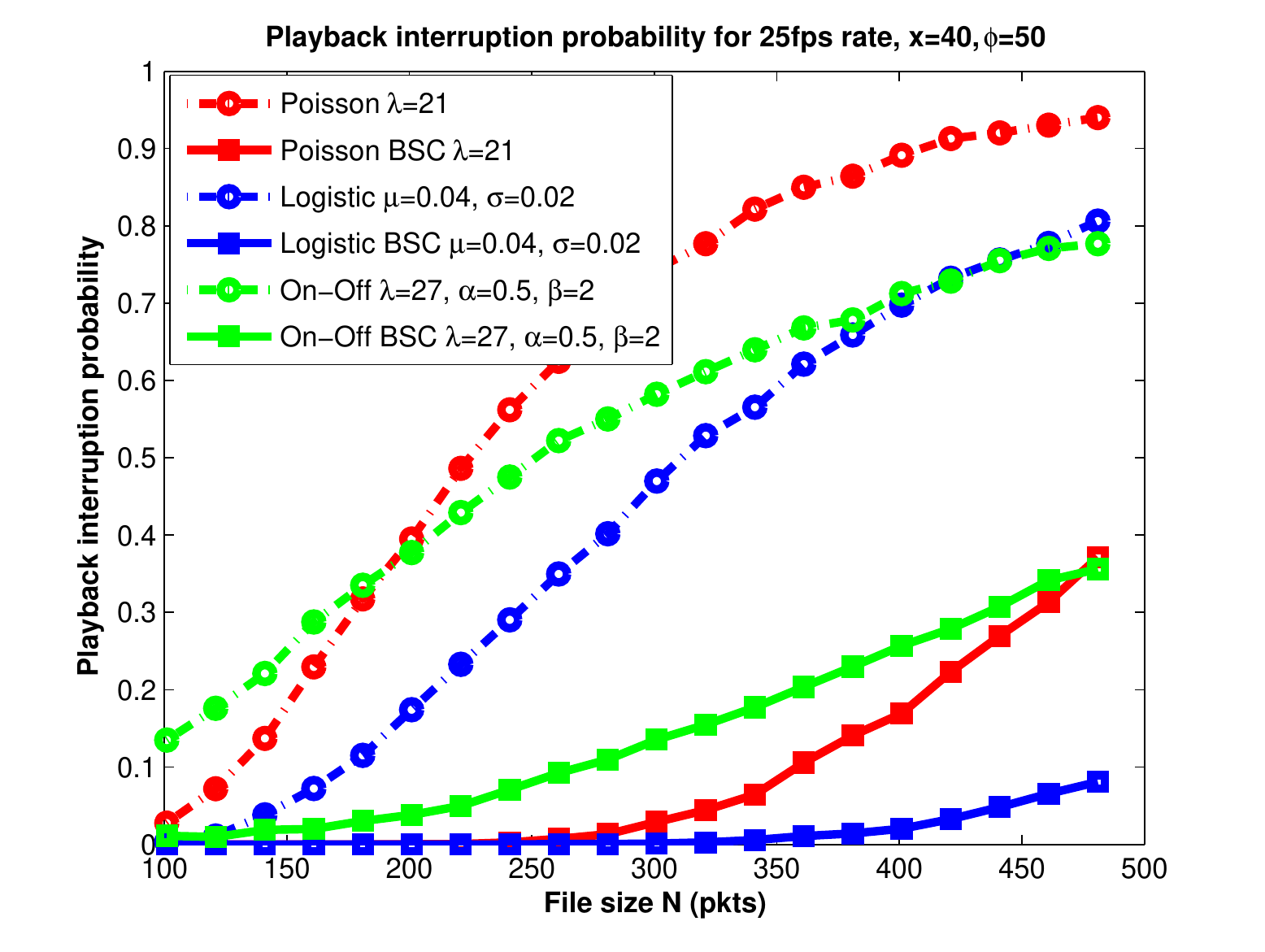}
\caption{The playback interruption probability using several packet arrivals processes}
\label{dist}
\end{center}
\end{figure}

Before studying the QoE optimization problem, we evaluate the BSC scheme  under different types of arrival processes:  the logistic  distribution and the  ON/OFF arrival process.   In Fig. \ref{dist} we show that the  BSC scheme reduces the  playback interruption. Furthermore, we obtain an important improvement of  the probability of playback interruption where the arrival of frames follows the logistic process. 

\subsection{QoE Optimization}
\label{QoEOPT}
We consider five video resolutions (1080p, 720p, 480p, 360p and 240p) with the corresponding bitrates (4500Kbps, 2500Kbps, 1000Kbps, 750Kbps and 400Kbps). The network throughput is 2200Kbps or 3000Kbps. Then we compute the corresponding traffic intensities $\rho$. We also compute the bitrate in the BSC scheme using H.264/SVC codec. We compare the QoE metrics between a classical DASH based SVC and the BSC system. Our BSC system is based on the SVC codec, then, we compare it to DASH/SVC since we know that SVC adds 10\% encoding overhead compared with the same quality AVC. This comparison is done on a single video segment. The bitrate adaptation in the BSC system is the scope of our ongoing research. We will show the two appropriate bitrates to select after the downloading of each segment, and compare the video quality to the classical DASH system. \\
Let assume that the network throughput is 2200Kbps. In DASH, the adaptation engine will select the bitrate that is just under the network throughput, i.e., 480p resolution. What happens if we select 480p$+$360p, 720p$+$360p or 720p$+$480p in the BSC scheme? \\
When we select 480p$+$360p in the BSC, the cost of DASH is better than BSC although there is no starvation in both cases. Indeed, the DASH system just benefits from the initial buffering latency. However, selecting the resolutions 720p$+$360p or 720p$+$480p allows to minimize the QoE cost function (Fig. \ref{720p}). There is only one starvation for a file size of 1500 frames but with a better rendering quality. Hence, when the network throughput changes to 3000Kbps, the BSC can use 1080p video resolution while the simple DASH cannot.
\begin{figure}[t]
\begin{minipage}[h]{0.48\linewidth}
\centering\epsfig{figure=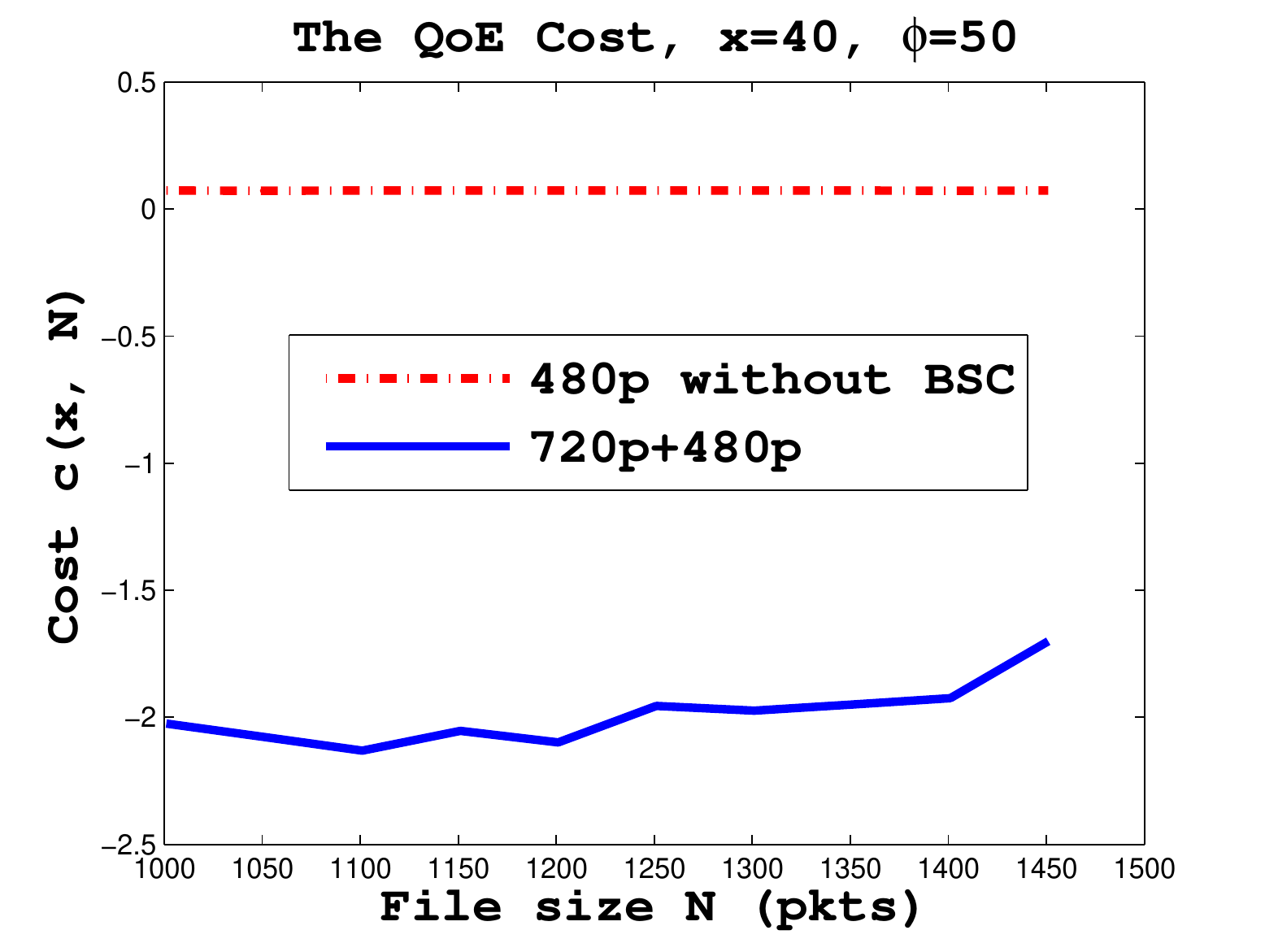, width=\linewidth}
\caption{The QoE cost function for 480p and 720p$+$480p, $\gamma_1=0.1$, $\gamma_2=1$, $\gamma_3=0.01$ \label{720p}}
\end{minipage}
\begin{minipage}[h]{0.48\linewidth}
\centering\epsfig{figure=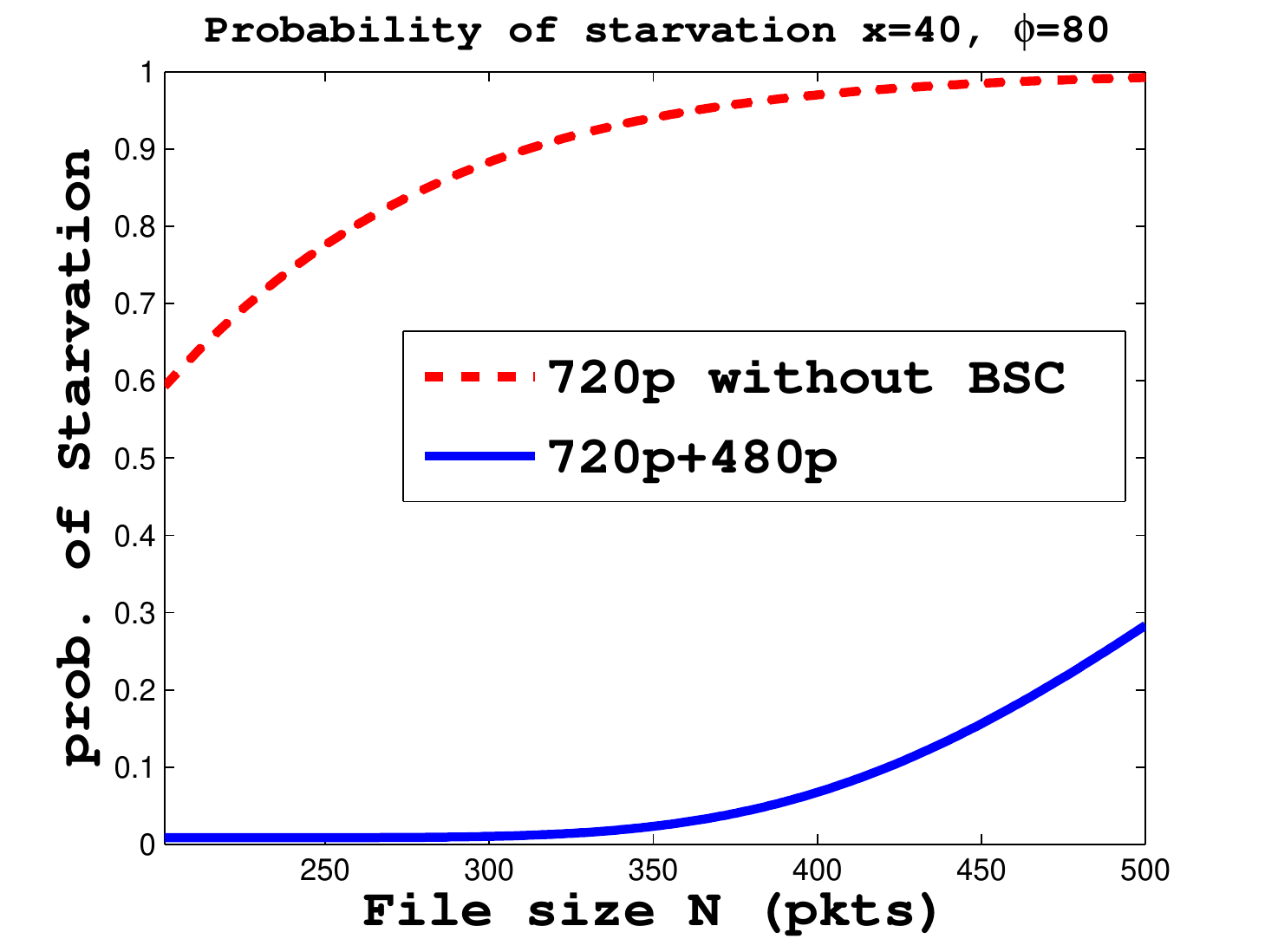, width=\linewidth}
\caption{Reduce the probability of starvation with BSC scheme \label{low_high}}
\end{minipage}
\end{figure}

\section{Conclusion and Discussion}
\label{conclusion}
In this paper, we proposed the novel Backward-Shifted Coding (BSC) scheme to improve the performance of HTTP adaptive streaming. Inspired from Forward Error Correction, BSC adds time-shifted redundancy to the video stream that provides smooth playback even if the main stream is interrupted.

We describe an integration of BSC into SVC and its operation with HAS. We then provide an analytical characterization of the dominating QoE factors initial buffering time, starvation and average video bitrate. We compute the first two metrics using the Ballot theorem and obtain explicit results for the probability generation function of the starvation. The average video bitrate metric is computed using the quasi-stationary approach. Finally, we show by global optimization that BSC can improve both the playback interruption and the average bitrate in video streaming systems compared to standard DASH on a single video segment. 

We will explore the bitrate adaptation in the BSC system in our future work.

% use section* for acknowledgement
\section*{Acknowledgment}
\normalsize{This work has been carried out in the framework of IDEFIX project, funded by the ANR under the contract number ANR-13-INFR-0006.}
%\balance
%\bibliographystyle{ieeetr}
%\bibliography{reference}

% that's all folks
\end{document}